\begin{document}
\renewcommand{\IEEEbibitemsep}{0pt plus 2pt}
\makeatletter
\IEEEtriggercmd{\reset@font\normalfont\footnotesize}
\makeatother
\IEEEtriggeratref{1}
\raggedbottom
\sloppy
\newtheorem{claim}{Claim}
\newtheorem{corollary}{Corollary}
\newtheorem{definition}{Definition}
\newtheorem{example}{Example}
\newtheorem{exercise}{Exercise}
\newtheorem{fact}{Fact}
\newtheorem{lemma}{Lemma}
\newtheorem{note}{Note}
\newtheorem{obs}{Observation}
\newtheorem{problem}{Problem}
\newtheorem{property}{Property}
\newtheorem{proposition}{Proposition}
\newtheorem{question}{Question}
\newtheorem{ru}{Rule}
\newtheorem{solution}{Solution}
\newtheorem{theorem}{Theorem}
\newenvironment{remark}[1]{\textbf{Remark: }}

\newcommand{\A}{{\bf a}}
\newcommand{\CC}{{\cal C}}
\newcommand{\B}{{\bf b}}
\newcommand{\I}{{\bf i}}
\newcommand{\p}{{\bf p}}
\newcommand{\bh}{{\bf h}}
\newcommand{\q}{{\bf q}}
\newcommand{\x}{{\bf x}}
\newcommand{\y}{{\bf y}}
\newcommand{\z}{{\bf z}}
\newcommand{\M}{{\bf M}}
\newcommand{\N}{{\bf N}}
\newcommand{\X}{{\bf X}}
\def\r {{\bf r}}
\newcommand{\bP}{{\bf P}}
\newcommand{\Q}{{\bf Q}}
\newcommand{\U}{{\bf u}}
\def\d {{\tt d}}
\def\D {{\tt D}}
\def\V {{\tt V}}
\def\w{{\bf w}}
\def\Av {{\tt Avg}}
\def\restr{\downharpoonright}
\newcommand{\comment}[1]{\ $[\![${\normalsize #1}$]\!]$ \ }

\newcommand{\restrict}{{\mathbin{\vert\mkern-0.5mu\grave{}}}}

\newcommand{\glb}{{\tt glb}}
\newcommand{\lub}{{\tt lub}}

\newcommand{\XX}{{\cal X}}
\newcommand{\YY}{{\cal Y}}
\newcommand{\LL}{{\cal L}}
\newcommand{\PP}{{\cal P}}
\newcommand{\pv}{{{\bf p}=(p_1,\ldots , p_n)}}

\newcommand{\qv}{{{\bf q}=(q_1,\ldots , q_n)}}
\newcommand{\Inf}{{\bf t}}
\newcommand{\Sup}{{\bf s}}
\newcommand{\inff}{{\tt t}}
\newcommand{\supp}{{\tt s}}
\def\H {{\cal H}}
\def\l {{\bf l}}

\allowdisplaybreaks
\newcommand{\g}{{\bf g}}
\newcommand{\s}{{sup}}
\def\i {{inf}}
\def\r {{\bf r}}
\def\t {{\bf t}}
\newcommand{\pcoin}{{coin of bias $p$ }}
\newcommand{\ptree}{{tree of bias $p$ }}
\newcommand{\remove}[1]{}

\newcommand{\todo}[1]{{\tiny\color{red}$\tiny\circ$}%
{\marginpar{\flushleft\\scriptsizes\sf\color{red}$\bullet$#1}}}

\thispagestyle{plain}

%\baselineskip= 0.6truecm
%\title{How to Approximate  Joint Probability Distributions of Minimum-Entropy (and why one should care), given their  Marginals}
%\title{How to Approximate  Minimum-Entropy Joint Probability Distributions, given the  Marginals}
\title{Maximum Entropy Interval Aggregations}
\author{
  \IEEEauthorblockN{Ferdinando Cicalese}
  \IEEEauthorblockA{%Dipartimento di Informatica\\
    Universit\`a di Verona,
    Verona, Italy\\
    Email: cclfdn22@univr.it } 
  %\and
  %\IEEEauthorblockN{Luisa Gargano
  %\IEEEauthorblockA{%Dipartimento di Informatica\\
   % Universit\`a di Salerno,
   % Salerno, Italy\\
   % Email: lgargano@unisa.it}
  \and
  \IEEEauthorblockN{Ugo Vaccaro}
  \IEEEauthorblockA{%Dipartimento di Informatica\\
    Universit\`a di Salerno,
    Salerno, Italy\\
    Email: uvaccaro@unisa.it} 
}
	
%\date{}
\maketitle
\begin{abstract}
Given a probability distribution $\p = (p_1, \dots, p_n)$ and an integer $1\leq m < n$, we say that 
$\q = (q_1, \dots, q_m)$ is a {\em contiguous $m$-aggregation of $\p$} if 
 there exist indices $0=i_0 < i_1 < \cdots < i_{m-1} < i_m = n$ such that 
for each $j = 1, \dots, m$ it holds that $q_j = \sum_{k=i_{j-1}+1}^{i_j} p_k.$
In this paper, we consider the problem of efficiently
finding the contiguous $m$-aggregation \emph{of maximum entropy}.
We design a dynamic programming algorithm that solves the problem exactly,
and two more time-efficient greedy algorithms that provide slightly sub-optimal
solutions. We also discuss a few scenarios where our problem matters.

\end{abstract}
\section{Introduction}
The problem of aggregating data in  a compact and meaningful way,
and such  that the aggregated data   retain the maximum possible information  
contained in  
the original data, arises in many scenarios \cite{Gan}. In this 
paper we consider the following particular instance of the general problem.
Let $\XX=\{x_1,  \ldots ,x_n\}$
be a  finite alphabet, and $X$ be any    random variable (r.v.)
taking values in $\XX$ according to the probability distribution
$\p=(p_1, p_2, \ldots , p_n)$, that is, such that $P\{X=x_i\}=p_i>0$, 
for $i=1, 2, \ldots , n$. Consider a partition $\Pi=(\Pi_1, \ldots , \Pi_m)$,
$m<n$, of the alphabet $\XX$, where each class $\Pi_i$ of the partition
$\Pi$ consists of \emph{consecutive} elements of $\XX$. That is,
there exist indices $1\leq  i_1 < \cdots < i_{m-1} < i_m = n$ such that 
$\Pi_1=\{x_1, \ldots , x_{i_1}\}, \Pi_2=\{x_{{i_1}+1}, \ldots , x_{i_2}\}, 
\ldots , \Pi_m=\{x_{{i_{m-1}}+1}, \ldots , x_{i_m}\}$.
Any given such a partition $\Pi=(\Pi_1, \ldots , \Pi_m)$ naturally gives
a r.v. $Y=f_\Pi(X)$, where for each $x\in \XX$ it holds that  $f_\Pi(x)=i$ if
and only if $x\in \Pi_i$.
Let $\q = (q_1, \dots, q_m)$ be the probability distribution of
r.v. $Y$. The values of the 
probabilities $q_j$ can obviously be computed as follows:   
for indices 
$0=i_0 < i_1 < \cdots < i_{m-1} < i_m = n$ it holds that $q_j = \sum_{k=i_{j-1}+1}^{i_j} p_k.$
The problem we consider in this paper is to determine the value
\begin{equation}\label{eq:problem1}
\max_\Pi I(X;f_\Pi(X)),
\end{equation}
where $I$ denotes the mutual information and  the maximum is computed over all $m$-class
partitions $\Pi=(\Pi_1, \ldots , \Pi_m)$ of set $\XX$,
in which each 
class $\Pi_i$ of the partition 
$\Pi$ consists of {consecutive} elements of $\XX$.
Since the function $f_\Pi$ is deterministic,  the problem (\ref{eq:problem1})
can be equivalently stated as
\begin{equation}\label{eq:problem2}
\max_\Pi H(f_\Pi(X)),
\end{equation}
where $H$ denotes Shannon entropy and the maximization takes place over the same domain as in (\ref{eq:problem1}).
The formulation (\ref{eq:problem1}) is common in the area of clustering
(e.g.,  \cite{F+,KMN}) to emphasize that the objective is
to reduce the ``dimension'' of the data (i.e., the cardinality of $|\XX|$) under the constraint that
the ``reduced'' data gives the maximum possible information
towards the original, not aggregated data. We %would also like to 
remark that, in general, there is no loss of generality in 
considering the problem (\ref{eq:problem1}) for deterministic
functions only (e.g., see %\cite{GA,kur}).
\cite{GA,kur}).

The  contributions of this paper consist in  efficient algorithms
to solve the optimization problems  (\ref{eq:problem1}) and (\ref{eq:problem2}).
More precisely, we design a dynamic programming  algorithm that runs in time
$O(n^2m)$ to find a partition $\Pi$ that achieves the maximum in 
(\ref{eq:problem2}). Since the time complexity 
$O(n^2m)$ can be too large in some applications, we also provide much more efficient
 greedy algorithms that  
return a solution \emph{provably} very close to the optimal one. 
We remark that the optimization problem (\ref{eq:problem2})
is  strongly NP-hard in case the function $f$ is 
an \emph{arbitrary} function such that $|f(\XX)|=m$, i.e., the partition into $m$ 
classes of $\XX$
induced by $f$ \emph{is not constrained } to contain only   classes made
by contiguous elements of $\XX$ (see \cite{CGV18}).

The rest of the paper is organized as follows. In Section \ref{sec:related}
we discuss the relevance of  our results in the context of related works.
In Section \ref{sec:dynamic} we present our $O(n^2m)$ dynamic programming
algorithm to solve problems (\ref{eq:problem1}) and (\ref{eq:problem2}).
In the final Section \ref{sec:greedy} we present two sub-optimal, but more 
time efficient,  
greedy algorithms for the same problems. 

\section{Related work}\label{sec:related}
The  problem of aggregating data  (or source symbols,
if we think of information sources) in an informative way has been widely studied 
in many different scenarios. One of the motivations is that data aggregation 
is often  an useful, preliminary step to reduce 
the complexity of successive data manipulation.
In this section  we limit ourselves to  point out the work that is strictly  related to ours.

In the paper \cite{KLMM15} the authors considered the following problem.
Given a discrete memoryless source, emitting symbols from
the alphabet $\XX=\{x_1,  \ldots ,x_n\}$ according to  the probability distribution
$\p=(p_1, p_2, \ldots , p_n)$, the question is to find a 
partition $\Pi=(\Pi_1, \ldots , \Pi_m)$,
$m<n$, of  the source alphabet $\XX$ where, as before,  each  $\Pi_i$ 
 consists of {consecutive} elements of $\XX$, and such that
the sum
\begin{equation}\label{eq:problem3}
\frac{1}{m}\sum_{i=1}^m\sum_{j=1}^m |q_i-q_j|,
\end{equation}
is \emph{minimized}. Each 
 $q_j$ in (\ref{eq:problem3}) is the sum of the probabilities $p_k$'s 
corresponding to the elements  $x_k\in \XX$ that belong to $\Pi_j$,
that is our $q_j=\sum_{k=i_{j-1}+1}^{i_j} p_k.$
The motivation of the authors of \cite{KLMM15} to study above problem 
is that the 
minimization of expression  (\ref{eq:problem3}) constitutes  the \emph{basic step}
in  the well known Fano algorithm \cite{Fano} for $m$-ary variable length encoding 
 finite-alphabet memoryless source.
In fact, solving (\ref{eq:problem3}) allows one to 
find a partition of $\XX$ such that the cumulative
probabilities of each class partition are \emph{as similar as possible}.
Obviously, the basic step has to be iterated in each class $\Pi_i$, till  the partition
is made by singletons.
Now, it is not hard to see that
\begin{equation}\label{eq:problem4}
\frac{1}{m}\sum_{i=1}^m\sum_{j=1}^m |q_i-q_j|=2+\frac{2}{m}-\frac{4}{m}\sum_{i=1}^m iq_{[i]},
\end{equation}
where $(q_{[1]}, \ldots , q_{[m]})$ is the vector that contains
the same elements as $\q=(q_1, \ldots, q_m)$, but now ordered
in non-increasing fashion. From equality (\ref{eq:problem4}) one can see
that the problem of minimizing
expression  (\ref{eq:problem3}), over all partitions as stated above,
is \emph{equivalent to maximizing} the quantity $\sum_{i=1}^m iq_{[i]}$ over
the same domain. The quantity $\sum_{i=1}^m iq_{[i]}$ 
is the well known guessing entropy by J. Massey \cite{Massey}.
Therefore, while in our problem (\ref{eq:problem2}) we seek a 
partition of $\XX$ such that the cumulative
probabilities of each class partition are {as similar as possible},
and the measure we use to appraise this quality 
is the Shannon entropy, the authors of \cite{KLMM15} address the
same problem 
using   the  guessing entropy, instead (this observation  is not present in
\cite{KLMM15}). We should add that the criterion (\ref{eq:problem3})
 used in \cite{KLMM15}
allows the authors to prove that the Fano algorithm produces an  
 $m$-ary variable length encoding 
of the given source such that the average length of the encoding
is strictly smaller than $\frac{H(\p)}{\log m} +1 - p_{\min}$,
for $m=2$ and $m=3$ (and they conjecture that this is true also for
any $m\geq 4$), where $\p$ is the source probability distribution and
$p_{\min}$ is the probability of the least likely source symbol.
On the other hand, \emph{it is not} clear how to efficiently solve
the optimization problem (\ref{eq:problem3}). In fact, it is not known whether it 
 enjoys or not  the \emph{optimal substructure} property, 
a necessary condition so that the problem could be optimally solved with 
known techniques like dynamic programming, greedy, etc. \cite{Cormen}.
As mentioned before, our problem (\ref{eq:problem2}) \emph{can 
be} optimally  solved  via dynamic programming. 
Numerical simulation suggests that  optimal solutions
to our problem (\ref{eq:problem2}) can be used to construct Fano encodings with
the same upper bound on the average length as the ones
constructed in \cite{KLMM15}.

A similar  question,  in which the aggregation operations of the elements of $\XX$   are 
again constrained 
by  given  rules, was considered in 
\cite{CK}.
There, the authors  consider
the problem of constructing the  \emph{summary tree}  of a given weighted tree,
by means of  contraction operations on trees.
{Two types of contractions are allowed: 1) subtrees may be contracted to single nodes that represent
the corresponding subtrees, 2) 
 subtrees whose roots
are siblings may be contracted to single nodes.  
Nodes obtained by contracting subtrees have weight equal to the sum
of the node weights in the original contracted subtrees.
} Given a bound on the number of nodes in the 
resulting summary tree, the problem %studied in \cite{CK}  
is to compute the summary tree of \emph{maximum entropy},
where the entropy of a tree is the Shannon entropy of the normalized node weights. 
In \cite{ME} the authors consider the problem of quantizing a finite
alphabet $\XX$ by collapsing   properly chosen contiguous sequences of symbols of $\XX$ (called
\emph{convex codecells} in \cite{ME}) to  single elements. The objective   is to  minimize the expected 
distortion induced by the quantizer, for some  classes of distortion measures. 
Our similar scenario  would correspond to the minimization of $H(X)-H(f_\Pi(X))$,
not considered in \cite{ME}.

Our results could find applications also in data compression  for sources with large alphabet (e.g. \cite{Moffatt}).
One could use our techniques as a pre-processing phase to reduce the source alphabet from
a large one to a smaller one, in order to obtain a new source that retains most of the 
entropy  as the original one, just because of (\ref{eq:problem2}). An encoding
of the so constructed ``reduced source'' can be easily 
transformed to an encoding of the original source by
exploiting the fact that the partition of the original 
source alphabet has been performed with consecutive 
subsets of symbols.
Finally, other problems similar to ours were considered in papers %\cite{B10,ER,KK99,La+,PM86,TT06}.
\cite{KK99,La+}.
It seems that our findings could be useful in ``histogram compression'',
where the constraint that one can merge only adjacent class intervals is natural \cite{PM86}.

 \section{An optimal dynamic programming algorithm}\label{sec:dynamic}
We find it convenient to formulate  problems (\ref{eq:problem1}) and (\ref{eq:problem2})
in a slightly different language. We  give the following definition.

\begin{definition}
Given a $n$-dimensional vector of strictly positive numbers $\p = (p_1, \dots, p_n)$ and a positive integer $m < n$, we say that 
a vector $\q = (q_1, \dots, q_m)$ is a {\em contiguous $m$-aggregation of $\p$} if the following condition hold:
 there exist indices $0=i_0 < i_1 < \cdots < i_{m-1} < i_m = n$ such that for each $j = 1, \dots, m$ it holds that $q_j = \sum_{k=i_{j-1}+1}^{i_j} p_k.$
\end{definition}

Thus, our problems can be so formulated:

\medskip
\noindent
{\bf Problem Definition.} Given an $n$-dimensional probability distribution $\p = (p_1, \dots, p_n)$ (where all components are assumed to be 
strictly positive)
and an integer $1\leq m < n$, find a contiguous $m$-aggregation of $\p$ of \emph{maximum entropy}.

\medskip\noindent
Our dynamic programming algorithm proceeds as follows.
For $j = 1, \dots, n,$ let $s_j = \sum_{k=1}^j p_k.$ Notice that we can compute all these values in $O(n)$ time. 
For a sequence of numbers $\w = w_1, \dots, w_t$ such that for each $i=1,\dots, t, \, w_i \in (0,1]$ and $\sum_{i=1}^t w_i \leq 1,$ 
we define the {\em entropy-like sum of $\w$} as $\tilde{H}(\w) = -\sum_{j=1}^t w_t \log w_t.$ 
Clearly when $\w$ is a probability distribution we have that the entropy-like sum of $\w$ coincides with the Shannon entropy of $\w.$
For each $i=1, \dots, m$ and $j = 1, \dots, n$ let $hq[i,j]$ be the maximum entropy-like sum of a contiguous  $i$-aggregation of the sequence $p_1, \dots, p_j.$
Therefore, $hq[m,n]$ is the sought  maximum entropy of a contiguous $m$-aggregation of $\p.$
Let $\hat{\q} = (q_1, \dots, q_i)$ be a contiguous $i$-aggregation of $(p_1, \dots, p_j)$ of maximum entropy-like sum. 
Let $r$ be the index such that $q_i = \sum_{k=r}^j p_k.$
We have $q_i = s_j - s_{r-1}$ and 

%$$\tilde{H}(\hat{\q}) = -(s_j-s_{r-1}) \log (s_j-s_{r-1}) + \tilde{H}(\q'),$$
%\remove{
\smallskip
$~~~~~~~~~~\displaystyle \tilde{H}(\hat{\q}) = -(s_j-s_{r-1}) \log (s_j-s_{r-1}) + \tilde{H}(\q'),$ 

\smallskip\noindent
%}
where $\q' = (q_1, \dots, q_{i-1})$. 
Now we observe that $\q'$ is a
contiguous $(i-1)$-aggregation of $(p_1, \dots, p_{r-1}).$ Moreover, since $\tilde{H}(\hat{\q})$ is \emph{maximum}---among the entropy-like sum of any
contiguous $i$-aggregation of $(p_1, \dots, p_i)$ ---it must also hold that $\tilde{H}(\q')$ is maximum among any contiguous $(i-1)$-aggregation of 
$(p_1, \dots, p_{r-1}).$ Based on this observation we can compute the $hq[\cdot,\cdot]$ values recursively as follows:

$$
hq[i,j] = \begin{cases} 
\displaystyle{\max_{k=i, \dots, j}} \{hq[i-1, k-1] & \\
~~~- (s_j - s_{k-1}) \log (s_j - s_{k-1})\} & i >1, \, j \geq i\\
-s_j \log s_j & i = 1.
\end{cases}
$$

There are $n \times m$ values to be computed and each one of them can be computed in $O(n)$ (due to the $\max$ in the first case). Therefore the
computation of $h[m,n]$ requires $O(n^2 m)$ time. 
By a standard procedure, once one has  the whole table $hq[\cdot,\cdot]$,  one  can reconstruct the contiguous $m$-aggregation of $\p$ achieving
entropy $hq[m,n]$ by backtracking on the table.

\section{Sub-optimal greedy algorithms}\label{sec:greedy}

We start by recalling  a few  notions of majorization theory \cite{MO}
that  are relevant to our context.

\begin{definition}\label{defmaj} 
Given two probability distributions
$\A=(a_1, \ldots ,a_n)$ and $\B=(b_1, \ldots , b_n)$ with $a_1\geq \ldots \geq a_n\geq 0$ and 
$b_1\geq \ldots \geq b_n\geq 0$, $\sum_{i=1}^na_i=\sum_{i=1}^nb_i=1$, we say that $\A$ is 
{\em majorized} by $\B$, and write  $\A \preceq \B$,
if and only if
$\sum_{k=1}^i a_k\leq \sum_{k=1}^i b_k, \quad\mbox{\rm for all }\  i=1,\ldots , n.$
\end{definition}
We   use the majorization 
relationship between vectors of unequal lengths, by properly padding the shorter
one with the appropriate number of $0$'s at the end.
Majorization  induces  a lattice structure  on 
$\PP_n=\{(p_1,\ldots, p_n)\ :  \sum_{i=1}^n p_i=1, \ p_1\geq \ldots \geq p_n\geq 0\}$, see \cite{CV}.%
\remove{  of all ordered probability vectors of $n$ elements, that is, for each $\x,\y,\z\in\PP_n$
it holds that 
\textbf{1)} $\x\preceq \x$;
\textbf{2)} $\x\preceq \y$ and $\y\preceq \z$  implies $\x\preceq \z$;
\textbf{3)}  $\x\preceq \y$ and $\y\preceq \x$  implies $\x =\y$.
}
\remove{It turns out that   that the partially ordered set $(\PP_n,\preceq)$ is indeed a \emph{lattice} \cite{CV},\footnote{The same result was independently 
rediscovered in \cite{cuff}}
%, see also \cite{harr} for an alternative  proof.} 
i.e., for all $\x,\y  \in \PP_n$
there exists
a unique
{\em least upper bound}  $\x\lor \y$ and 
 a unique {\em greatest lower bound}  $\x \land \y$.
}
Shannon entropy function enjoys the  important Schur-concavity property \cite{MO}:
\emph{For any  $\x,\y\in\PP_n$, 
 $\x\preceq \y$ implies that    $H(\x)\geq H(\y)$.}
%, with equality     if and only if   $ \x=\y$. }
		\remove{
A notable strengthening  of above fact has been proved  in \cite{HV}. There, the authors prove that 
$\x\preceq \y$ implies $H(\x)\geq H(\y)+D(\y||\x)$, where $D(\y||\x)$ is the relative entropy
between $\x$ and $\y$.}
We also need the  concept of \emph{aggregation}  and a result from \cite{CGV17}.
Given $\p=(p_1, \ldots , p_n)\in \PP_n $ and an integer $1\leq m<n$,
we say that $\q=(q_1, \ldots , q_m)\in \PP_m$ is an \emph{aggregation} 
of $\p$ if there is a partition of $\{1, \ldots , n\}$ into disjoint sets $I_1, \ldots , I_m$
such that $q_j=\sum_{i\in I_j}p_i$, for $j=1, \ldots m$.

\begin{lemma}{\rm\cite{CGV17}}\label{pprecq}
Let $\q\in \PP_m $ be \emph{any} aggregation of $\p\in \PP_n$.  Then it holds that
$\p\preceq \q$.
\end{lemma}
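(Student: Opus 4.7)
The plan is to verify the majorization inequality directly from the definition. After padding $\q$ with $n-m$ trailing zeros so that both vectors have length $n$, I need to show that for every $k \in \{1,\dots,n\}$,
$$\sum_{i=1}^k p_{[i]} \;\le\; \sum_{j=1}^k q_{[j]},$$
where $p_{[i]}$ and $q_{[j]}$ denote the components in non-increasing order. Recall that the aggregation is given by a partition $I_1,\dots,I_m$ of $\{1,\dots,n\}$ with $q_j=\sum_{i\in I_j}p_i$.

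First I would dispense with the easy range $k \ge m$: the right-hand side equals $\sum_{j=1}^m q_j = 1$, which is at least $\sum_{i=1}^k p_{[i]}$ since $\p$ is a probability distribution. So the real work is the case $1 \le k < m$.

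For that case, the key idea is to track which blocks $I_j$ the $k$ largest components of $\p$ land in. Let $T \subseteq \{1,\dots,n\}$ index the $k$ largest components of $\p$, and let $S=\{\,j : I_j \cap T \neq \emptyset\,\}$. Clearly $|S|\le |T|=k$. Since all $p_i>0$,
$$\sum_{i=1}^k p_{[i]} \;=\; \sum_{i \in T} p_i \;\le\; \sum_{j\in S}\,\sum_{i \in I_j} p_i \;=\; \sum_{j\in S} q_j,$$
and then, using that the sum of any $|S|$ components of $\q$ is bounded by the sum of the $|S|$ largest, followed by $|S|\le k$,
$$\sum_{j\in S} q_j \;\le\; \sum_{j=1}^{|S|} q_{[j]} \;\le\; \sum_{j=1}^{k} q_{[j]}.$$
Chaining these estimates gives the desired prefix inequality, hence $\p \preceq \q$.

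There is no real obstacle here beyond keeping the two orderings straight: $\p$ and $\q$ are ordered by magnitude, while the partition $I_1,\dots,I_m$ lives on the original, unsorted indices, so one has to be careful not to conflate the block index $j$ with the rank $[j]$. The only place where the positivity assumption $p_i>0$ is used is to guarantee the inclusion $\sum_{i\in T}p_i \le \sum_{j\in S}\sum_{i\in I_j}p_i$; if zero components were allowed the same bound would still hold with equality extended.
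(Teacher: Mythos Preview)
Your argument is correct. The key step---picking the set $S$ of blocks hit by the indices of the $k$ largest $p$-values, bounding $|S|\le k$, and then chaining $\sum_{i\in T}p_i\le\sum_{j\in S}q_j\le\sum_{j=1}^{|S|}q_{[j]}\le\sum_{j=1}^{k}q_{[j]}$---is clean and complete. (One small remark: you invoke $p_i>0$ for the first inequality, but as you yourself note at the end, only $p_i\ge 0$ is needed, since $T\subseteq\bigcup_{j\in S}I_j$ already gives the bound.)

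The paper's proof takes a different route: it works by induction on the prefix length $i$, assuming both $\p$ and $\q$ are already sorted (they lie in $\PP_n$, $\PP_m$), and at each step splits into two cases according to whether some index $\ell\le i$ falls into a block $I_j$ with $j\ge i$. If so, $q_i\ge q_j\ge p_\ell\ge p_i$ and the inductive step follows; if not, $\{1,\dots,i\}\subseteq I_1\cup\cdots\cup I_{i-1}$, so already $\sum_{k\le i-1}q_k\ge\sum_{k\le i}p_k$. Your ``blocks hit'' argument is more direct and avoids the case split; the inductive version, on the other hand, stays closer to the index structure and makes explicit use of the ordering of $\q$. Both are short and either would serve.
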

\remove{\begin{IEEEproof} 
We shall prove  by induction on $i$ that $\sum_{k=1}^{i}q_k\geq \sum_{k=1}^{i}p_k$.
Because  $\q$ is an aggregation of $\p$, we know that there exists    $I_j\subseteq \{1, \ldots , n\}$
such that $1\in I_j$. This implies that  $q_1\geq q_j\geq p_1$. Let us suppose that 
$\sum_{k=1}^{i-1}q_k\geq \sum_{k=1}^{i-1}p_k$. 
If there exist  indices $j\geq i$ and $\ell\leq i$ such that $\ell\in I_j$,
then $q_i\geq q_j\geq p_\ell\geq p_i$,  implying 
$\sum_{k=1}^{i}q_k\geq \sum_{k=1}^{i}p_k$.  
Should it be otherwise,  for each $j\geq i$ and $\ell\leq i$ it holds that $\ell\not \in I_j$. Therefore,
$\{1, \ldots ,i\}\subseteq I_1\cup \ldots \cup I_{i-1}$. This immediately gives
$\sum_{k=1}^{i-1}q_k\geq \sum_{k=1}^{i}p_k$, from which we 
get $\sum_{k=1}^{i}q_k\geq \sum_{k=1}^{i}p_k$.
\end{IEEEproof}

\medskip
}

\remove{
\section{Aggregations with partitions made of index intervals}

From now on, unless explicitly said, we do not assume that the components of the probability distributions are 
listed in non-increasing order, as was the case in the previous part of this manuscript.

\begin{definition}
Given a $n$-dimensional vector of positive numbers $\p = (p_1, \dots, p_n)$ and a positive integer $m < n$, we say that 
a vector $\q = q_1, \dots, q_m$ is a {\em contiguous $m$-aggregation of $\p$} if the following conditions hold:
\begin{itemize}
\item $q_i > 0$ for each $i=1, \dots, m$
\item there exist indices $0=i_0 < i_1 < \cdots < i_{m-1} < i_m = n$ such that for each $j = 1, \dots, m$ it holds that $q_j = \sum_{k=i_{j-1}+1}^{i_j} p_k.$
\end{itemize}
\end{definition}

We study the following problem:

\medskip

\noindent
{\bf Problem Definition.} Given an $n$-dimensional probability distribution $\p = (p_1, \dots, p_n)$ (where all components are assumed to be positive)
and a positive integer $m < n$, find a contiguous $m$-aggregation of $\p$ of maximum entropy.

\bigskip

We will first show an optimal DP algorithm for the problem which takes $O(n^2 m )$ time. Then we will analyze a linear, i.e., $O(n)$-time greedy algorithm which 
guarantees to find a contiguous $m$-aggregation $\q$ of $\p$ such that $H(\q) \geq H(\q^*) - \frac{2}{e \ln(2)}$ where $\q^*$ is a contiguous $m$-aggregation 
of $\p$ of maximum entropy. In words, the greedy algorithm guarantees an additive approximation which is not worse than $\frac{2}{e \ln(2)} \sim 1.0614757.$ 

\subsection{The Optimal DP algorithm}

For $j = 1, \dots, n,$ let $s_j = \sum_{k=1}^j p_k.$ Notice that we can compute all these values in $O(n)$ time. 

For a sequence of numbers $\w = w_1, \dots, w_t$ such that for each $i=1,\dots, t, \, w_i \in (0,1]$ and $\sum_{i=1}^t w_i \leq 1,$ 
we define the {\em entropy-like sum of $\w$} as $\tilde{H}(\w) = -\sum_{j=1}^t w_t \log w_t.$ 
Clearly when $\w$ is a probability distribution we have that the entropy-like sum of $\w$ coincides with the Shannon entropy of $\w.$

For each $i=1, \dots, m$ and $j = 1, \dots, n$ let $hq[i,j]$ be the maximum entropy-like sum of a contiguous  $i$-aggregation of the sequence $p_1, \dots, p_j.$

Therefore, $hq[m,n]$ is the maximum entropy of a contiguous $m$-aggregation of $\p.$

Let $\hat{\q} = q_1, \dots, q_i$ be a contiguous $i$-aggregation of $p_1, \dots, p_j$ of maximum entropy like sum. 
Let $r$ be the index such that $q_i = \sum_{k=r+1}^j p_k.$
We have $q_i = s_j - s_{r-1}$ and $\tilde{H}(\hat{\q}) = -(s_j-s_{r-1}) \log (s_j-s_{r-1}) + \tilde{H}(\q')$ where $\q' = q_1, \dots, q_{i-1}$. 
Now we observe that $\q'$ is 
contiguous $(i-1)$-aggregation of $p_1, \dots, p_{r-1}.$ Moreover, since $\tilde{H}(\hat{\q})$ is maximum---among the entropy-like sum of any
contiguous $i$-aggregation of $p_1, \dots, p_i$, it must also hold that $\tilde{H}(\q')$ is maximum among any contiguous $(i-1)$-aggregation of 
$p_1, \dots, p_{r-1}.$

Based on this observation we can compute the $hq[\cdot,\cdot]$ values recursively as follows:
$$
hq[i,j] = \begin{cases} 
\displaystyle{\max_{k=i, \dots, j}} \{hq[i-1, k-1] & \\
~~~~~~- (s_j - s_{k-1}) \log (s_j - s_{k-1})\} & i >1, \, j \geq i\\
-s_j \log s_j & i = 1
\end{cases}
$$

There are $n \times m$ values to be computed and each one of them can be computed in $O(n)$ (due to the $\max$ in the first case). Therefore the
computation of $h[m,n]$ requires $O(n^2 m)$ time. 
By a standard procedure, once we have computed the whole table of $hq[]$ values we can reconstruct the contiguous $m$-aggregation of $\p$ achieving
entropy $hq[m,n]$ by backtracking on the table.
 
}
 %\subsection{A linear greedy algorithm guaranteeing constant additive approximation}
We now  present our first greedy   approximation algorithm for the problem of 
finding the maximum entropy contiguous $m$-aggregation of
a given probability distribution $\p=(p_1, \dots, p_n).$
The pseudocode of the algorithm is given
below.%  in pseudocode in Algorithm  \ref{algo:Greedy}, {\sc Greedy-Approximation}. 

\begin{algorithm}[ht]
\small
{\sc  Greedy-Approximation}$(p_1, \dots p_n,m)$~~
\begin{algorithmic}[1]
\STATE{//Assume $n > m$ and an auxiliary value $p_{n+1} = 3/m$}
\STATE{$i \leftarrow 0,\, j \leftarrow 1$}    \label{phase1-start}
\STATE{$partialsum \leftarrow p_j$}
\WHILE{$j \leq  n$}
\STATE{$i \leftarrow i+1, \, start[i] \leftarrow j$}
\WHILE{$partialsum + p_{j+1} \leq 2/m$}
\STATE{$partialsum \leftarrow partialsum + p_{j+1}, \, j \leftarrow j+1$}
\ENDWHILE
\STATE{$q_i \leftarrow partialsum, \, end[i] \leftarrow j$}
\STATE{$j \leftarrow j+1,\, partialsum \leftarrow p_j$} \label{phase1-end}
\ENDWHILE
\STATE{// At this point $i$ counts the number of components in $\q$}
\STATE{// If $i < m$ we are going to split exactly $m-i$ components}
\STATE{$k \leftarrow m-i, \, j \leftarrow 1$}   \label{phase2-start}
\WHILE{$k > 0$}
\WHILE{$start[j] = end[j]$}
\STATE{$j \leftarrow j+1$}
\ENDWHILE
\STATE{$i \leftarrow i+1, \, k \leftarrow k-1$}
\STATE{$start[i] \leftarrow start[j], \, end[i] \leftarrow start[j],\, start[j] \leftarrow start[j]+1$}  \label{phase2-end}
\ENDWHILE
\end{algorithmic}
\caption{A  linear time greedy approximation algorithm}
\label{algo:Greedy}
\end{algorithm}
 
The algorithm has two phases. 
In the first phase, lines from \ref{phase1-start} to \ref{phase1-end}, 
the algorithm iteratively builds a new component of $\q$ as follows: 
Assume that the first $i$ components of $\q$ have been produced by aggregating the first $j$ components of $\p.$
If $p_{j+1} > 2/m$ then $q_{i+1}$ is the aggregation of the singleton interval containing only $p_{j+1}.$
Otherwise, $q_{i+1}$ is set to be the aggregation of the largest number of components $p_{j+1}, p_{j+2}, \dots$
such that their sum is not larger than $2/m.$   

For each $k = 1, \dots, i,$ the values $start[k]$ and $end[k]$ 
are meant to contain the first and the last component of $\p$ which are 
aggregated into $q_k$. By construction,
we have that $start[k] \neq end[k]$ indicates that $q_k \leq  2/m.$
The first crucial observation is that, at the end of the first phase, the number $i$
 of components in the distribution $\q$ under construction is smaller than $m.$ To see this, 
it is enough to observe that by construction $q_j + q_{j+1} > 2/m,$ for any $j=1,2, \dots, \lfloor i/2 \rfloor.$ 
Therefore,  arguing by contradiction, if we had $i \geq m+1$ we would reach the following counterfactual inequality 
$$1\! =\! \sum_{j=1}^i q_j \!\geq\! \sum_{j=1}^{\lfloor i/2 \rfloor} (q_{2j-1} + q_{2j}) \! >\!\!\sum_{j=1}^{\lfloor i/2 \rfloor} \frac{2}{m} 
\!=\! \left\lfloor \frac{i}{2} \right\rfloor \frac{2}{m} 
\!\geq\! \frac{2i-2}{2m}\geq 1.$$
%\begin{eqnarray*}
%1 &=& \sum_{j=1}^i q_j \geq \sum_{j=1}^{\lfloor i/2 \rfloor} (q_{2j-1} + q_{2j}) >  \sum_{j=1}^{\lfloor i/2 \rfloor} \frac{2}{m} \\
%&=& \lfloor i/2 \rfloor \frac{2}{m} 
%\geq \frac{i-1}{2}   \frac{2}{m} \geq 1.
%\end{eqnarray*}
In the second phase, lines \ref{phase2-start}-\ref{phase2-end}, 
the algorithm splits the first $m-i$ components 
of $\q$ which are obtained by aggregating at least two components of $\p$. Notice that, as observed above, 
such components of $\q$ are not larger than $2/m.$ Hence, also the resulting components in which 
they are split have size at most $2/m.$
It is important to notice that there must exist at least $m-i$ such ``composite''\footnote{We are calling a component $q_j$ 
{\em composite} if it is obtained as the sum of at least two components of $\p.$} 
components, because of the assumption $n > m,$
and the fact that each component of $\p$ is non zero. 
As a result of the above considerations, the aggregation $\q$ returned by the {\sc Greedy-Approximation} algorithm 
can be represented, after reordering its
components in non-increasing order, as
$\q = (q_1, \dots, q_{k^*}, q_{k^*+1}, \dots q_m),$
where $q_1, \dots, q_{k^*}$ are all larger than $2/m$ and coincide with the $k^*$ largest components of $\p$
and the remaining components of $\q$, namely $q_{k^*+1}, \dots, q_m,$ are all not larger than $2/m.$
Let us now define the quantities 
 $A = 1 - \sum_{j=1}^{k^*} q_j, \quad \mbox{and} \quad B =  \sum_{j=1}^{k^*} q_j \log \frac{1}{q_j}.$\\
It holds that 
\begin{eqnarray}
H(\q) &=& \sum_{j=1}^{k^*} q_j \log \frac{1}{q_j} + \sum_{j=k^*+1}^{m} q_j \log \frac{1}{q_j}   \label{q-entropy-0} \\
&=& B +  \sum_{j=k^*+1}^{m} q_j \log \frac{1}{q_j}    \label{q-entropy-1} \\
&\geq& B +  \sum_{j=k^*+1}^{m} q_j \log \frac{m}{2}    \label{q-entropy-2} \\
&=& B +  A \log(m) - A  \label{q-entropy-3} 
\end{eqnarray}
where (\ref{q-entropy-1}) follows by definition of $B$; (\ref{q-entropy-2}) follows by the fact that
$q_j \leq \frac{2}{m}$ for any $j > k^*$; (\ref{q-entropy-3}) follows by definition of $A$ and the 
basic properties of the logarithm.

\begin{lemma} \label{lemma:qVSq-tilde}
Let $\tilde{\q}$ be the probability distribution defined as
$\tilde{\q} = (q_1, \dots, q_{k^*}, \frac{A}{m-k^*}, \dots,  \frac{A}{m-k^*}).$ Then, it holds that: \qquad 
$\displaystyle ~~~~~~~~~~~~~~~~~~~~~~~~ H(\q) \geq H(\tilde{\q}) - \frac{2}{e \ln(2)}.$
\end{lemma}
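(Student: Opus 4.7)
The plan is to compute $H(\tilde{\q})$ in closed form, plug the already-derived lower bound (\ref{q-entropy-3}) on $H(\q)$ into the inequality we want to establish, and reduce everything to a one-variable calculus problem.

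First I would evaluate $H(\tilde{\q})$ directly. Since the last $m-k^*$ entries of $\tilde{\q}$ are all equal to $A/(m-k^*)$, and the first $k^*$ entries are $q_1,\dots,q_{k^*}$ (contributing exactly $B$ to the entropy),
$$H(\tilde{\q}) = B + (m-k^*)\cdot\frac{A}{m-k^*}\log\frac{m-k^*}{A} = B + A\log(m-k^*) - A\log A.$$
Combining this with the lower bound $H(\q)\geq B + A\log m - A$ from (\ref{q-entropy-3}), the inequality $H(\q)\geq H(\tilde{\q}) - \tfrac{2}{e\ln 2}$ becomes equivalent to
$$A - A\log\frac{mA}{m-k^*} \;\leq\; \frac{2}{e\ln 2}.$$

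Next I would prove this last inequality by treating $A$ as a free positive variable with $m,k^*$ fixed. Setting $c = m/(m-k^*)\geq 1$ and $g(A) := A - A\log(cA)$, differentiation gives $g'(A) = 1 - \log c - \log A - 1/\ln 2$, whose unique zero is at $A^* = 2(m-k^*)/(me)$ and is a maximum (since $g$ is concave on $(0,\infty)$). Substituting back,
$$g(A^*) = A^*\bigl(1 - \log(2/e)\bigr) = A^*\log_2 e = \frac{A^*}{\ln 2} = \frac{2(m-k^*)}{me\ln 2} \leq \frac{2}{e\ln 2},$$
where the final inequality follows from $m - k^* \leq m$. This yields the required bound.

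I do not expect a serious obstacle here: the previous chain (\ref{q-entropy-0})--(\ref{q-entropy-3}) has already absorbed all the structural information about $\q$ (namely that the ``small'' entries satisfy $q_j\leq 2/m$), so what remains is a routine one-variable optimization. The only thing worth double-checking is that the optimizing $A^*$ is admissible in the sense that $g$ is indeed maximized on the full positive axis (so that any restriction to the feasible range of $A$ can only decrease the bound), which is clear from concavity of $-A\log A$. The threshold $2/(e\ln 2)$ emerges precisely because the unconstrained maximum of $u(1-\log u)$ is attained at $u = 2/e$, matching the $2/m$ threshold chosen in the greedy algorithm's design.
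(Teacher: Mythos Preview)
Your proposal is correct and follows essentially the same route as the paper: compute $H(\tilde{\q})$ explicitly, subtract the lower bound (\ref{q-entropy-3}), and reduce to a one-variable optimization. The only cosmetic difference is that the paper first discards the non-positive term $A\log\frac{m-k^*}{m}$ and then maximizes the simpler function $-A\log A + A$ over $[0,1]$, whereas you keep the factor $c=m/(m-k^*)$ through the optimization and invoke $m-k^*\leq m$ only at the very end; both arrive at the same bound $2/(e\ln 2)$.
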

\begin{proof}
We have 
\begin{eqnarray*}
H(\tilde{\q}) &=& \sum_{j=1}^{k^*} q_j \log\frac{1}{q_j} + \sum_{j=k^*+1}^m \frac{A}{m-k^*} \log \frac{m-k^*}{A} \\
&=& B + A \log(m-k^*) - A \log A.
\end{eqnarray*}
Therefore, by using the above lower bound (\ref{q-entropy-0})-(\ref{q-entropy-3}) on the entropy of $\q$ it follows that 
\begin{eqnarray*}
H(\tilde{\q}) - H(\q) &\leq& A \log\frac{m-k^*}{m} - A \log(A) + A  \\
&\leq& - A \log(A) + A  \leq \frac{2}{e \ln(2)}  
\end{eqnarray*}
where the second inequality follows since $A \log\frac{m-k^*}{m} \leq 0$ for any $k^* \geq 0$ and the last inequality 
follows by the fact that  $A \in [0,1]$ and the maximum of the function $-x\log x + x$ in the interval $[0,1]$ is 
$\frac{2}{e \ln(2)}.$
\end{proof}

Let $\q^* = (q_1^*, \dots q_m^*)$ be a contiguous $m$-aggregation of $\p$ of maximum entropy.
We can use $\tilde{\q}$ to compare the entropy of our greedily constructed contiguous $m$-aggregation $\q$ to the entropy of $\q^*$.
We prepare the following
\begin{lemma} \label{lemma:q-tildeVSq-star}  
It holds that $\tilde{\q} \preceq \q^*,$ therefore $H(\tilde{\q}) \geq H(\q^*).$
\end{lemma}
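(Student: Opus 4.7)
The second implication, $H(\tilde{\q}) \geq H(\q^*)$, is immediate from the Schur-concavity of Shannon entropy recalled above, so the task is to establish the majorization $\tilde{\q} \preceq \q^*$. To this end I would first describe precisely what $\tilde{\q}$ looks like once reordered in non-increasing fashion. Observe that every composite component of $\q$ is at most $2/m$ by construction; consequently, a component $q_i > 2/m$ must be a singleton, and the singleton components $q_i > 2/m$ are in bijection with the elements $p_j > 2/m$ of $\p$. Hence the multiset $\{q_1,\dots,q_{k^*}\}$ coincides with the top-$k^*$ entries $\p_{[1]} \geq \cdots \geq \p_{[k^*]}$ of $\p$. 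Moreover $A/(m-k^*) \leq 2/m$, since the $m - k^*$ components of $\q$ not exceeding $2/m$ sum to $A$. Therefore the sorted version of $\tilde{\q}$ is $\bigl(\p_{[1]}, \dots, \p_{[k^*]}, A/(m-k^*), \dots, A/(m-k^*)\bigr)$.

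Next, I would verify the defining inequality $\sum_{i=1}^k \tilde{\q}_{[i]} \leq \sum_{i=1}^k \q^*_{[i]}$ for every $k \in \{1,\dots,m\}$ in two cases. For $k \leq k^*$, the left side equals $\sum_{i=1}^k \p_{[i]}$, and since $\q^*$ is an aggregation of $\p$, Lemma \ref{pprecq} gives $\p \preceq \q^*$, which immediately yields the desired bound. For $k > k^*$, the left side equals $(1-A) + (k-k^*) A/(m-k^*)$, so after rearrangement the required inequality becomes
\[
\sum_{i=k+1}^{m} \q^*_{[i]} \;\leq\; \frac{m-k}{m-k^*}\, A.
\]

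The final step—and the one I expect to be the only delicate point—establishes this bound via a simple averaging argument. From Lemma \ref{pprecq} applied with $k^*$ we get $\sum_{i=1}^{k^*}\q^*_{[i]} \geq \sum_{i=1}^{k^*} \p_{[i]} = 1-A$, so the $m - k^*$ smallest components of $\q^*$ sum to at most $A$. Since these entries are arranged in non-increasing order, the average of any tail among them does not exceed the average of the whole tail: the last $m-k$ entries (which are precisely the $m-k$ smallest components of $\q^*$) satisfy
\[
\sum_{i=k+1}^{m} \q^*_{[i]} \;\leq\; \frac{m-k}{m-k^*}\sum_{i=k^*+1}^{m}\q^*_{[i]} \;\leq\; \frac{m-k}{m-k^*}\, A,
\]
closing the case $k > k^*$ and thus the proof of $\tilde{\q} \preceq \q^*$. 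The conclusion $H(\tilde{\q}) \geq H(\q^*)$ then follows by Schur-concavity.
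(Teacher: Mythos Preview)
Your proof is correct and follows essentially the same route as the paper: both arguments use Lemma~\ref{pprecq} to handle the prefix sums for $k\le k^*$, derive the tail bound $\sum_{i>k^*}\q^*_{[i]}\le A$, and then apply the same averaging inequality on the ordered tail to cover $k>k^*$. Your write-up is slightly more explicit than the paper's in justifying that $\tilde{\q}$ is already sorted (via $A/(m-k^*)\le 2/m$) and in spelling out the averaging step, but there is no substantive difference in strategy.
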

\begin{proof}
Assume, w.l.o.g., that the components of $\q^*$ are sorted in non-increasing order. 
Let $\tilde{\p}= (\tilde{p}_1, \dots, \tilde{p}_n)$ be the probability distribution obtained by reordering the components of $\p$ in non-increasing order.
It is not hard to see that, by construction, we have $\tilde{p}_j = q_j$ for each $j = 1, \dots, k^*.$
Since $\q^*$ is an aggregation of $\tilde{\p}$, by Lemma \ref{pprecq}, we have that $\tilde{\p} \preceq \q^*$, which immediately implies 
\begin{equation} \label{q-star:first-k}
\sum_{s=1}^j q_s = \sum_{s = 1}^j \tilde{p}_s \leq \sum_{s=1}^j q^*_j \qquad \mbox{for each }j=1, \dots, k^*.
\end{equation}
Moreover, by the last inequality with $j=k^*$ it follows that
$\sum_{s=k^*+1}^m q^*_s \leq 1- \sum_{s=1}^{k^*} q_s = A.$
This, together with the assumption that $q^*_1 \geq \cdots q^*_{k^*} \geq q^*_{k^*+1} \geq \cdots q^*_m$ implies that 
\begin{equation} \label{eq:intermediate}
\sum_{s=t+1}^m q^*_s \leq \frac{m-t}{m-k^*} A \qquad \mbox{for any } t \geq k^*.
\end{equation}
Then, for each $j = k^*, \dots, m$ we have 
$$\sum_{s=1}^j q^*_j = 1 - \!\!\sum_{s=j+1}^m q^*_s \geq 1 -\!\! \frac{m-j}{m-k^*}A = 1 - \!\!\sum_{s=j+1}{m} \tilde{q}_s = 
\sum_{s=1}^j \tilde{q}_s$$
that together with (\ref{q-star:first-k}) implies $\tilde{\q} \preceq \q^*.$

\smallskip
This concludes the proof of the first statement of the Lemma. 
The second statement immediately follows by the Schur concavity of the entropy function.
\end{proof}
We are now ready to summarize our findings.
\begin{theorem}
Let $\q$ be the contiguous $m$-aggregation of $\p$ returned by {\sc  Greedy-Approximation}. Let $\q^*$ be a
contiguous $m$-aggregation of $\p$ of maximum entropy. Then, it holds that
$$H(\q) \geq H(\q^*) - \frac{2}{e \ln(2)} = H(\q^*) - 1.0614756...$$
\end{theorem}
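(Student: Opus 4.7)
The plan is a direct two-step chaining of Lemma~\ref{lemma:qVSq-tilde} and Lemma~\ref{lemma:q-tildeVSq-star}, using the auxiliary distribution $\tilde{\q}$ as a bridge between the greedy output $\q$ and the optimum $\q^*$.

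First I would invoke Lemma~\ref{lemma:qVSq-tilde} to obtain $H(\q) \geq H(\tilde{\q}) - \tfrac{2}{e\ln 2}$. Then I would apply Lemma~\ref{lemma:q-tildeVSq-star}, which yields $\tilde{\q} \preceq \q^*$; combining this with the Schur-concavity of Shannon entropy (recalled at the beginning of Section~\ref{sec:greedy}) gives $H(\tilde{\q}) \geq H(\q^*)$. Chaining the two inequalities produces $H(\q) \geq H(\q^*) - \tfrac{2}{e\ln 2}$, and the numerical value $1.0614756\ldots$ is just the decimal expansion of the constant.

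The hard part is already absorbed into the two preceding lemmas: Lemma~\ref{lemma:qVSq-tilde} rested on the elementary but delicate bound $\max_{x\in[0,1]}(-x\log x + x) = \tfrac{2}{e\ln 2}$ together with the structural fact that all ``small'' components of $\q$ produced by {\sc Greedy-Approximation} have mass at most $2/m$, whereas Lemma~\ref{lemma:q-tildeVSq-star} leveraged Lemma~\ref{pprecq} and the property that the $k^*$ largest components of $\q$ coincide with the $k^*$ largest components of $\p$ (those exceeding $2/m$). At the theorem level there is no remaining obstacle.

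The only point worth emphasizing in the write-up is \emph{why} the detour through $\tilde{\q}$ is indispensable: one cannot majorize $\q$ directly by $\q^*$, because the small components of $\q$ need not be uniform and there is no a priori control on how the greedy aggregation distributes the residual mass $A = 1 - \sum_{j=1}^{k^*} q_j$ among the last $m - k^*$ cells. Replacing those cells by their uniform counterparts produces $\tilde{\q}$, which costs at most $\tfrac{2}{e\ln 2}$ in entropy (Lemma~\ref{lemma:qVSq-tilde}) and is majorized by any feasible competitor (Lemma~\ref{lemma:q-tildeVSq-star}); the theorem is then a one-line corollary of these two facts.
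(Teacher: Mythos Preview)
Your proposal is correct and matches the paper's approach exactly: the paper's proof consists of the single line ``Directly from Lemmas~\ref{lemma:q-tildeVSq-star} and~\ref{lemma:qVSq-tilde},'' which is precisely the chaining you describe. Note that you need not invoke Schur-concavity separately, since the conclusion $H(\tilde{\q}) \geq H(\q^*)$ is already part of the statement of Lemma~\ref{lemma:q-tildeVSq-star}.
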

\begin{proof}
Directly from Lemmas \ref{lemma:q-tildeVSq-star} and  \ref{lemma:qVSq-tilde}.   
\end{proof}
\subsection{A slightly improved greedy approach}
We can improve the approximation guarantee of Algorithm \ref{algo:Greedy} by a refined greedy approach of complexity  $O(n + m\log m)$. 
The new idea  is to build the components of $\q$ in such a way that they are either not larger than $3/2m$ or they 
coincide with some large component of $\p$. 
More precisely, when building a new component of $\q$, say $q_i$, the algorithm puts together consecutive components of $\p$ as long as their sum, denoted 
$partialsum$, is not larger than $1/m$. If, when trying to add  the next component, say $p_j$,  the total sum becomes larger than $1/m$ the following three cases are considered:

\noindent
{\em Case 1.} $partialsum + p_j \in [\frac{1}{m}, \frac{3}{2m}].$

In this case $q_i$ is set to include also $p_j$ hence becoming a component of $\q$ of size not larger than $3/2m.$

\noindent
{\em Case 2.} $partialsum + p_j > \frac{2}{m}.$

In this case we produce up to two components of $\q$. Precisely, if $partialsum = 0$ that is $p_j > 2/m$ we set $q_i = p_j$ and only one new component is created. Otherwise,  $q_i$ is set to $partialsum$ (i.e., it is the sum of the interval up to $p_{j-1}$, and it is not larger than $1/m$ and 
$q_{i+1}$ is set to be equal to $p_j.$ Notice that in this case $q_{i+1}$ might be larger than $3/2m$ but it is a non-composite component.

\noindent
{\em Case 3.} $partialsum + p_j \in  (\frac{3}{2m}, \frac{2}{m}).$

In this case we produce one component of $\q$, namely  $q_i$ is set to $partialsum+p_j$ and we mark it.

We first observe that the total number of components of $\q$ created by this procedure is not larger than $m$. 
More precisely, let $k_1, k_2, k_3$ be the number of components created by the application of Case 1, 2, and 3 respectively. 
Each component created by Case 1 has size $\geq 1/m$. When we apply Case 2 we create either one component of size $> 2/m$ or two components of total sum
$> 2/m$. Altogether the $k_2$ components created by Case 2 have total sum at least $k_2/m.$ Then, since each component created by applying Case 3 has 
size at least $3/2m$ we have that
$k_3 \leq \frac{1 - (k_1 + k_2)/m}{3/2m} = \frac{2(m - k_1 - k_2)}{3},$ hence
$k_1 + k_2 + k_3 \leq \frac{2m}{3} + \frac{1}{3}(k_1+k_2)$,
from which we get
\textbf{1)} $m - k_1 - k_2 \geq \frac{3}{2} k_3,$
 and 
\textbf{2)} $m - k_1 - k_2 - k_3 \geq \frac{1}{2} k_3.$
Inequalities \textbf{1)} and \textbf{2)}   mean that if $k_3 > 0$ then the number of components created is smaller than $m$ by a quantity which equals at least half of $k_3.$
In other words, we are allowed to split at least half of the $k_3$ components created by Case 3 and the resulting total number of components
will still be not larger than $m$.
In the second phase of the algorithm, the largest components created from Case 3 are split. 
As a result of the above considerations, the final distribution $\q$ returned by the algorithm has: (i)  components $> 2/m$ which are 
singletons, i.e., coincide with components of $\p$; the remaining components can be divided into two sets, the components of size $> 3/2m$ and the ones of
size $\leq 3/2m$ with the second set having larger total probability mass. In formulas, we can represent the probability vector $\q$, after reordering its
components in non-increasing order, as
$\q = (q_1, \dots, q_{k^*}, q_{k^*+1}, \dots q_{j^*}, q_{j^*+1}, \dots, q_m),$
where: (i) $q_1, \dots, q_{k^*}$ are all larger than $2/m$ and coincide with the $k^*$ largest components of $\p$;
(ii) $q_{k^*+1}, \dots, q_{j^*}$ are all in the interval $(3/2m, 2/m)$; (iii)    $q_{j^*+1}, \dots, q_m,$ are all not larger than $3/2m.$
Let us define the quantities \\
$\displaystyle ~~~~~ A_1 = \sum_{s=k^*1}^{j^*} q_s, \quad  A_2 = \sum_{s=j^*1}^{m} q_s, \quad B =  \sum_{j=1}^{k^*} q_j \log \frac{1}{q_j}.~~~~~$
Let $A = A_1 + A_2.$ Since the algorithm splits the largest components of size $3/2m$ it follows that $A_2 \geq A/2.$
Then, by proceeding like in the previous section we have 
\begin{eqnarray}
\!\!\!\!H(\q)\!\!\!\!\!\! &=&\!\!\!\!\!\!\sum_{s=1}^{k^*} q_s \log \frac{1}{q_s}\! +\!\!\!\sum_{s=k^*+1}^{j^*} \!\!\!q_s \log \frac{1}{q_s} \!  +\!\!\!
\sum_{s=j^*+1}^{m} \!\!\!q_s \log \frac{1}{q_s} \label{qq-entropy-0} \\
&\geq&\!\!\!\! B +  \sum_{s=k^*+1}^{j^*} q_s \log \frac{m}{2}   +  \sum_{s=j^*+1}^{m} q_s \log \frac{2m}{3}  \label{qq-entropy-2} \\
&=& \!\!\!\!B +  (A_1+A_2) \log(m) - A_1 - A_2\log \frac{3}{2} \label{qq-entropy-3} \\
&\geq&\!\!\!\! B + A \log(m) - \frac{A}{2} \log(3)
\end{eqnarray}
 \remove{
\begin{eqnarray}
H(\q) &=& \sum_{s=1}^{k^*} q_s \log \frac{1}{q_s} + \sum_{s=k^*+1}^{j^*} q_s \log \frac{1}{q_s}  \\
& & ~~~~~~~~~~ + \sum_{s=j^*+1}^{m} q_s \log \frac{1}{q_s} \label{qq-entropy-0} \\
&\geq& B +  \sum_{s=k^*+1}^{j^*} q_s \log \frac{m}{2}   +  \sum_{s=j^*+1}^{m} q_s \log \frac{2m}{3}  \label{qq-entropy-2} \\
&=& B +  (A_1+A_2) \log(m) - A_1 - A_2\log \frac{3}{2} \label{qq-entropy-3} \\
&\geq& B + A \log(m) - \frac{A}{2} \log(3)
\end{eqnarray}
}
where the last inequality holds since $A_2 \geq A/2.$
Proceeding like in  Lemma \ref{lemma:qVSq-tilde} above, we have the following result. 

\begin{lemma} \label{lemma:qVSq-tilde-new}
Let $\tilde{\q}$ be the probability distribution defined as
$\tilde{\q} = (q_1, \dots, q_{k^*}, \frac{A}{m-k^*}, \dots,  \frac{A}{m-k^*}).$ It holds that: 
$~~~~~~~~~~~~~~~~~~~~~~\displaystyle H(\q) \geq H(\tilde{\q}) - \frac{\sqrt{3}}{e \ln(2)}.$
\end{lemma}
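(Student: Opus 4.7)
The plan is to mirror the proof of Lemma~\ref{lemma:qVSq-tilde}, but inserting the refined lower bound on $H(\q)$ that was just derived in (\ref{qq-entropy-0})--(\ref{qq-entropy-3}). First I would write $H(\tilde{\q})$ explicitly. Since the first $k^*$ entries of $\tilde{\q}$ coincide with $q_1,\dots,q_{k^*}$ and each of the remaining $m-k^*$ entries equals $A/(m-k^*)$, expanding the definition of Shannon entropy yields
$$H(\tilde{\q}) \;=\; B + A\log(m-k^*) - A\log A.$$

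Next, I would subtract from this the lower bound $H(\q) \geq B + A\log m - \tfrac{A}{2}\log 3$. The $B$ terms cancel, and since $k^*\geq 0$ we have $A\log\tfrac{m-k^*}{m}\leq 0$, hence
$$H(\tilde{\q}) - H(\q) \;\leq\; -A\log A + \frac{A}{2}\log 3.$$
It then remains to bound the right-hand side uniformly in $A\in[0,1]$.

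The right-hand side is a smooth concave function of $A$ whose derivative $-\log A - \log e + \tfrac{1}{2}\log 3$ vanishes at $A^* = \sqrt{3}/e$, which lies strictly inside $[0,1]$. Plugging $A^*$ back in, the $\tfrac{1}{2}\log 3$ contribution cancels with the corresponding piece of $-\log A^* = \log e - \tfrac{1}{2}\log 3$, leaving only $A^*\log e = \sqrt{3}/(e\ln 2)$, which is exactly the constant claimed in the statement. I do not expect a real obstacle here: the argument is essentially a scalar calculus exercise parallel to the one in Lemma~\ref{lemma:qVSq-tilde}, and the only slightly delicate point is checking that the critical point is interior and that the surviving constant after cancellation is precisely $\sqrt{3}/(e\ln 2)$ rather than some nearby expression.
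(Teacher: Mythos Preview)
Your proposal is correct and is exactly the argument the paper intends: it explicitly says to proceed ``like in Lemma~\ref{lemma:qVSq-tilde}'' using the refined lower bound $H(\q)\geq B + A\log m - \tfrac{A}{2}\log 3$, and your computation of the maximizer $A^*=\sqrt{3}/e$ and the resulting constant $\sqrt{3}/(e\ln 2)$ is precisely what is needed to fill in the details.
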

 
This result, together with Lemma \ref{lemma:q-tildeVSq-star} implies

\begin{theorem}
Let $\q$ be the contiguous $m$-aggregation of $\p$ returned by the algorithm {\sc Greedy-2}. Let $\q^*$ be a
contiguous $m$-aggregation of $\p$ of maximum entropy. Then, it holds that
$$H(\q) \geq H(\q^*) - \frac{\sqrt{3}}{e \ln(2)} = H(\q^*) - 0.91926... .$$
\end{theorem}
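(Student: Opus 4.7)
The plan is to derive the theorem as an immediate consequence of Lemma~\ref{lemma:qVSq-tilde-new} combined with (the appropriate reuse of) Lemma~\ref{lemma:q-tildeVSq-star}. Concretely, I would introduce the auxiliary distribution $\tilde{\q}=(q_1,\dots,q_{k^*},\frac{A}{m-k^*},\dots,\frac{A}{m-k^*})$ formed from the sorted output of {\sc Greedy-2}, and argue the chain
\[
H(\q)\;\geq\; H(\tilde{\q})\;-\;\tfrac{\sqrt{3}}{e\ln 2}\;\geq\; H(\q^*)\;-\;\tfrac{\sqrt{3}}{e\ln 2},
\]
where the first inequality is exactly Lemma~\ref{lemma:qVSq-tilde-new} and the second is what needs the majorization argument.

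The only step that requires genuine verification, rather than a direct appeal, is that Lemma~\ref{lemma:q-tildeVSq-star} still applies to the $\tilde{\q}$ associated with {\sc Greedy-2}. I would reread the proof of that lemma and isolate the single structural property it uses: the first $k^*$ components of $\q$ (after sorting in non-increasing order) coincide with the $k^*$ largest components of $\p$. In {\sc Greedy-2}, every component of $\q$ that exceeds $2/m$ is produced by Case~2 as a non-composite singleton, i.e.\ as some $p_j$ with $p_j>2/m$. Hence the set $\{q_1,\dots,q_{k^*}\}$ is exactly the set of components of $\p$ that exceed $2/m$, which are automatically the $k^*$ largest entries of $\p$. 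With this structural fact in hand, the proof of Lemma~\ref{lemma:q-tildeVSq-star} transfers verbatim, yielding $\tilde{\q}\preceq \q^*$ and therefore $H(\tilde{\q})\geq H(\q^*)$ by Schur-concavity of Shannon entropy.

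The conclusion then follows by concatenating the two inequalities, producing the additive gap $\frac{\sqrt{3}}{e\ln 2}=0.91926\ldots$ claimed. I do not anticipate any real obstacle: the analytic work has already been compressed into Lemma~\ref{lemma:qVSq-tilde-new}, and the majorization step is reused unchanged. The only place where one could slip up is in silently assuming that $k^*$ components of $\q$ match the top-$k^*$ of $\p$; the guard against this is the explicit observation above that {\sc Greedy-2} never creates a composite component larger than $2/m$ (Cases~1 and~3 both cap composite components at $2/m$, while Case~2 either produces a singleton or closes off a composite of size $\leq 1/m$).
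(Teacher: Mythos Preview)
Your proposal is correct and matches the paper's own argument exactly: the theorem is obtained by combining Lemma~\ref{lemma:qVSq-tilde-new} with Lemma~\ref{lemma:q-tildeVSq-star}. Your extra care in checking that the top $k^*$ components of the sorted output of {\sc Greedy-2} coincide with the $k^*$ largest entries of $\p$ (so that Lemma~\ref{lemma:q-tildeVSq-star} applies unchanged) is a welcome clarification the paper leaves implicit.
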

\begin{algorithm}[ht]
\small
{\sc  Greedy2}($p_1, \dots, p_n, m$){~// assume $n > m$ and auxiliary $p_{n+1} = 2$}

\begin{algorithmic}[1]
%\STATE{//Assume $n > m$ and $p_{n+1} = 2$}
\STATE{$i \leftarrow 0,\, j \leftarrow 1$}    \label{algo2:phase1-start}
\WHILE{$j \leq  n$}
\STATE{$i \leftarrow i+1, \, start[i] \leftarrow j, \, partialsum \leftarrow 0$}
\WHILE{$partialsum + p_{j} \leq 1/m$ {\bf and} $j \leq n$}
\STATE{$partialsum \leftarrow partialsum + p_{j}, \, j \leftarrow j+1$}
\ENDWHILE
\IF{$j > n$}
\STATE{$q_i \leftarrow partialsum$}
\STATE{{\bf break while}}
\ENDIF
\IF{$partialsum + p_{j} \in (\frac{1}{m}, \frac{3}{2m}]$}
\STATE{$q_i \leftarrow partialsum+p_{j}, \, end[i] \leftarrow j$}
\ELSE \IF{$partialsum + p_{j} > \frac{2}{m}$}
\IF{$partialsum > 0$}
\STATE{$q_i \leftarrow partialsum, \, end[i] \leftarrow j-1, \, i \leftarrow i+1$}
\ENDIF
\STATE{$q_{i}  \leftarrow p_{j}, \, start[i] \leftarrow j, \, end[i] \leftarrow j$}
\ELSE \STATE{// we are left with the case $partialsum+p_j \in (\frac{3}{2m}, \frac{2}{m})$}
\STATE{$q_i \leftarrow partialsum + p_{j}, \, end[i] \leftarrow j$}
\IF{$partialsum > 0$}
\STATE{{\bf Add} index $i$ to the list  {\em Marked-indices}: $mark[i] \leftarrow 1$}
\ENDIF
\ENDIF
\ENDIF
\STATE{$j \leftarrow j+1$} \label{algo2:phase1-end}
\ENDWHILE
\STATE{// At this point $i$ counts the number of components in $\q$}
\STATE{// If $i < m$ we are going to split exactly $m-i$ components starting with the list of {\em Marked-indices}}
\STATE{$k \leftarrow m-i, \, j \leftarrow 1$}   \label{algo2:phase2-start}
\STATE{{\bf Sort} the set ${\cal Q}^*$ of marked components, in non-increasing order}
\STATE{Split $m-i$ largest components in ${\cal Q}^*$. The split is done by creating one component with the largest/last piece and one component with the remaining parts. If  in ${\cal Q}^*$ there are less than $m-i$ components complete with composite components}
\end{algorithmic}
\caption{Improved approximation in $O(n + m\log m)$ time}
\label{algo:Greedy2}
\end{algorithm}


\begin{thebibliography}{99}
%\bibitem{Baez}
 %J.C. Baez, T. Fritz and T. Leinster, ``A Characterization of Entropy in Terms of Information Loss'',
%\emph{Entropy}, Vol. \textbf{17}, 772--789, 2015.
%\bibitem{B10}
%E. Bingham, ``Finding Segmentations of Sequences'', in: \emph{Inductive Databases and
%Constraint-Based Data Mining},  (S. D\v zeroski, B. Goethals, and P. Panov, Eds.),
%pp. 177–197, Springer New York, 2010
	\bibitem{CV}
	F. Cicalese and U. Vaccaro, ``Supermodularity and subadditivity properties of the entropy on the majorization lattice'',
	\emph{IEEE Transactions on	Information Theory},  Vol. \textbf{48}, 933--938, 2002.
	%\bibitem{CV1}
		%F. Cicalese and U. Vaccaro, ``Bounding the average length of optimal source codes via majorization theory'',
		%\emph{IEEE Transactions on Information Theory} , Vol. \textbf{50}, 633--637, 2004.
%\bibitem{CGVold}  F. Cicalese, L. Gargano, and U. Vaccaro,
%``Approximating probability distributions with  short vectors, via  information theoretic distance measures'', 
%in: \emph{Proceedings of 2016 IEEE International Symposium on Information Theory (ISIT 2016)}, pp. 1138--1142, 2016. 
\bibitem{CGV17}  F. Cicalese, L. Gargano, and U. Vaccaro,
``$H(X)$ vs. $H(f(X))$", 
in: \emph{Proc. ISIT 2017}, pp. 51-55. 
\bibitem{CGV18}F. Cicalese L. Gargano, and  U. Vaccaro,
``Bounds on the Entropy of a Function of a Random Variable and their Applications",
\emph{IEEE Transactions on Information Theory}, Vol. \textbf{64}, 2220--2230, 2018.

%\emph{Proceedings of 2016 IEEE International Symposium on Information Theory (ISIT 2016)}, pp. 1138-1142, 2016. 
\bibitem{CK}R. Cole and H. Karloff,``Fast algorithms for constructing maximum entropy summary trees'', in: 
\emph{Proc. of  ICALP 2014},  pp. 332--343, 2014.
\bibitem{Cormen} T.H. Cormen, C.E. Leiserson, R.L. Rivest, ‎ C. Stein,
\emph{Introduction to Algorithms}, MIT Press, (2009).
 %\bibitem{CT}T. M. Cover and  J. A. Thomas, 
%\emph{Elements of Information Theory}, 
% Wiley-Interscience; 2nd edition (2006).
%\bibitem{D+}Z. Dvir, D. Gutfreund, G. Rothblum, and S. Vadhan, ``On approximating the entropy of polynomial mappings''. In: \emph{Proceedings of the 2nd Innovations in Computer Science Conference}, pp. 460--475, 2011.
%\bibitem{ER}T. Elomaa and J. Rousu,
%``Efficient Multisplitting Revisited: Optima-Preserving Elimination of Partition Candidates'',
%\emph{Data Mining and Knowledge Discovery}, vol. \textbf{8}, 97--125, 2008.
\bibitem{F+}
L. Faivishevsky and J. Faivishevsky,
``Nonparametric information theoretic clustering algorithm'',
in: \emph{Proceedings of the 27th International Conference on Machine Learning (ICML-10)},
  pp. 351--358, 2010.
	\bibitem{Fano}
	R. M. Fano, ``The transmission of information'',  Research Laboratory of
Electronics, Mass. Inst. of Techn. (MIT), Tech. Report No. 65, 1949.
%\bibitem{GJ}  M. R. Garey and D. S. Johnson, \emph{Computers and Intractability: A Guide to the Theory of NP-Completeness}, W. H. Freeman (1979).
\bibitem{Gan}
	G. Gan, C. Ma, and  J. Wu, \emph{Data Clustering: Theory, Algorithms, and Applications}, 
	%ASA-SIAM Ser. on Stat. and Appl. Probab, 2007.
	%SIAM, Philadelphia, ASA, Alexandria, VA, 2007.
	ASA-SIAM Series on Statistics and Applied
Probability, SIAM, Philadelphia, ASA, Alexandria, VA, 2007.
\bibitem{GA}B.C. Geiger and R.A. Amjad, ``Hard Clusters Maximize Mutual Information'', \texttt{arXiv:1608.04872 [cs.IT]}, 2016.
%\bibitem{HY} S.W. Ho and R.W. Yeung, ``The interplay between entropy and variational distance'', \emph{IEEE Transactions Information  Theory}, \textbf{56},  5906--5929, 2010.
%\bibitem{HV} S. W. Ho and S. Verd\`u, ``On the interplay between conditional entropy and error probability'', \emph{IEEE Transactions   Information  Theory},  \textbf{56},   5930--5942, 2010.
\bibitem{KMN}
M. Kearns,  Y. Mansour, and A. Y. Ng, ``An information-theoretic analysis of hard and soft assignment methods for clustering.'' 
In: \emph{Learning in graphical models}. Springer Netherlands, pp. 495--520, 1998.
	%\bibitem{KSS15}
	%M. Kova\v{c}evi\'c, I. Stanojevi\'c, and V. Senk,  ``On the Entropy of Couplings'',  \emph{Information
	%and Computation}, Vol. \textbf{242}, (2015). 369--382.
	\bibitem{KK99}
	T. Kampke and  R. Kober,
	``Discrete signal quantizations'',
	\emph{Pattern Recognition}, vol. \textbf{32}, pp. 619--634,  (1999).
	\bibitem{KLMM15}
 S. Kraj\v ci,  C.-F. Liu,  L. Mike\v s, and S.M. Moser,
``Performance analysis of Fano coding'',
 in: \emph{Proc. of ISIT 2015},  pp. 1746--1750.
\bibitem{kur} B.M. Kurkoski, H. Yagi, ``Quantization of Binary-Input Discrete
Memoryless Channels'', \emph{IEEE Trans.  on Inf. Theory},  \textbf{60},  4544--4552,  2014.
%\bibitem{rota} G.-C. Rota, ``Twelve problems in probabilityno one likes to bring up'', in:  \emph{Algebraic Combinatorics and Computer Science}, H. Crapo \emph{et al}. (eds.), pp. 57--93, Springer-Verlag, (2001).
 %J.P.S. Kung, G.-C. Rota, and  C.H. Yan, \emph{Combinatorics: The Rota Way}, Cambridge University Press, (2009).
%\bibitem{La+} R. Lamarche-Perrin, Y. Demazeau, J.-M. Vincent, ``The best-partitions problem:
%How to build meaningful aggregations'', \emph{2013 IEEE/WIC/ACM Inter. Conf. on Web Intell.  and Intelligent Agent Techhology}, 309--404
%\bibitem{Ma}
%P. Malacaria and J. Heusser,
%``Information theory and security: Quantitative information flow'', in:
% Aldini A. \emph{et al.}
%, Bernardo M., Di Pierro A., Wiklicky H. 
%(eds) \emph{Formal Methods for Quantitative Aspects of Programming Languages. SFM 2010}. Lecture Notes in Computer Science, vol. \textbf{6154}. Springer, Berlin, Heidelberg.
\bibitem{La+} R. Lamarche-Perrin, Y. Demazeau, J.-M. Vincent, ``The best-partitions problem:
How to build meaningful aggregations'', \emph{2013 IEEE/WIC/ACM Inter. Conf. on Web Intell.  and Intell. Agent Techhology}, 309--404, 2013.
\bibitem{MO} A.W. Marshall, I. Olkin,   B. C. Arnold, \textit{Inequalities: Theory of
        Majorization and Its Applications}, 
         Springer, New York (2009).
				\bibitem{Massey}
				J.L. Massey, ``Guessing and entropy'', in:  Proc. of 1994 IEEE International Symposium on Information Theory,
				Trondheim, Norway, p. 204.
				\bibitem{Moffatt}
				A. Moffat and  A. Turpin,
				``Efficient construction of minimum-redundancy codes for large alphabets'',
				\emph{IEEE Transactions on Information Theory}, vol. \textbf{44}, 1650--1657, 1998.
				\bibitem{ME} D. Muresan and  M. Effros, ``Quantization as Histogram Segmentation: Optimal Scalar Quantizer Design in Network Systems'',
				\emph{IEEE Transactions on Information Theory},  Vol. \textbf{54}, 344--366 (2008).
				\bibitem{PM86}
G.M. Perillo, E. Marone, ``Determination of optimal number of class intervals
using maximum entropy'', \emph{Math. Geo.},  \textbf{18}, 401--407, 1986.
%\bibitem{TT06}
%E. Terzi and P. Tsaparas, 
%``Efficient algorithms for sequence segmentation'',
%in: \emph{Proc. of the 2006 SIAM International Conf. on Data Mining},


%\bibitem{S+} S. Simic,  ``Jensen's inequality and new entropy bounds.'' \emph{Appl. Math. Letters}, \textbf{22},  1262--1265,   2009.
%\bibitem{V12}
	%M. Vidyasagar, 
	%``A metric between probability distributions on finite sets of different cardinalities and applications to order reduction'', 
%\emph{IEEE Transactions on	Automatic Control},    \textbf{57},  2464--2477, 2012.

\end{thebibliography}
\end{document}